\newtheorem{thrm}{Theorem}
\newtheorem{lemm}[thrm]{Lemma}
\def\itbf#1{\textit{\textbf{#1}}}
\def\str#1{\texttt{#1}}
\def\Oh{\mathcal{O}}
\def\s#1{\mbox{\boldmath $#1$}}
\def\pref#1{\mbox{pref(\s{#1})}}
\def\suff#1{\mbox{suff(\s{#1})}}
\def\:{\mbox{\ :\ }}
\def\+{\!+\!}
\def\-{\!-\!}
\def\rank{\mbox{\rm rank}}
\def\balgorithm#1{{\bf Algorithm #1}}
\def\bfor{{\bf for\ }}
\def\bdownto{{\bf downto\ }}
\def\bwhile{{\bf while\ }}
\def\bbreak{{\bf break\ }}
\def\bdo{{\bf do\ }}
\def\bif{{\bf if\ }}
\def\bthen{{\bf then\ }}
\def\belse{{\bf else\ }}
\def\breturn{{\bf return\ }}
\def\la{\leftarrow}
\def\rem#1{\hspace{24pt}{\sl #1}}
\def\pref(#1,#2){$#1$ is a prefix of $#2$}
\def\suff(#1,#2){$#1$ is a suffix of $#2$}
\def\reg(#1,#2){$#2$ is $#1$-regular}
\def\notreg(#1,#2){$#2$ is not $#1$-regular}
\def\Mbwt{\mathcal{M}}
\newcommand{\dna}{{\sc dna}\xspace}
\newcommand{\english}{{\sc english}\xspace}
\newcommand{\xml}{{\sc xml}\xspace}
\newcommand{\sources}{{\sc source}\xspace}
\begin{document}

\title{Fixed Block Compression Boosting in FM-Indexes\thanks{This work
    was supported by the Academy of Finland grant 118653 (ALGODAN) and
    by the Australian Research Council. Simon J. Puglisi is
    supported by a Newton Fellowship.}}

\author{ Juha K{\"a}rkk{\"a}inen\inst{1} \and Simon J.\
  Puglisi\inst{2} }

\institute{
  Department of Computer Science,
  University of Helsinki, Finland\\
  \email{juha.karkkainen@cs.helsinki.fi}\\[1ex]
  \and
  Department of Informatics,
  King's College London, London, United Kingdom\\
  \email{simon.puglisi@kcl.ac.uk} }

\maketitle

\thispagestyle{empty}

\begin{abstract}
  A compressed full-text self-index occupies space close to that of
  the compressed text and simultaneously allows fast pattern matching
  and random access to the underlying text. Among the best compressed
  self-indexes, in theory and in practice, are several members of the
  FM-index family.  In this paper, we describe new FM-index variants
  that combine nice theoretical properties, simple implementation and
  improved practical performance. Our main result is a new technique
  called {\em fixed block compression boosting}, which is a simpler
  and faster alternative to optimal compression boosting and implicit
  compression boosting used in previous FM-indexes.
\end{abstract}

\section{Introduction}
\label{sect-intro}

A compressed full-text self-index~\cite{nm2007} of a text string $T$
is a data structure that stores $T$ in a compressed form that allows
fast random access to $T$ and also supports fast pattern matching
queries. We focus here on the \emph{count} query that, given a pattern
string $P$, returns the number of occurrences of $P$ in
$T$.~\footnote{Our indexes support other common queries such as
  \emph{locate} and \emph{extract}, but the algorithmic and
  implementation issues in engineering them are quite different and
  outside the scope of this paper.}  Many of the best compressed
self-indexes, in theory and in practice, belong to the FM-index family
originating from the FM-index of Ferragina and
Manzini~\cite{fm2005}. In particular, they combine good compression
with fast count
queries~\cite{fmmn2007,mn2007-implicit,fgnv2009,cn2008}. In this
paper, we describe new variants of the FM-family achieving even better
compression and faster count queries.

The main components of most FM-indexes are:
\begin{itemize}
\item The \emph{Burrows--Wheeler transform} (BWT)~\cite{bw1994}: an
  invertible permutation of the text~$T$.  A procedure called
  \emph{backward search}~\cite{fm2005} turns a count query on $T$ into
  a sequence of \emph{rank} queries on the BWT.
\item The \emph{wavelet tree}~\cite{ggv2003}: a representation of the
  BWT that turns a BWT rank query into a sequence of rank queries on
  bitvectors.
\item A bitvector \emph{rank index}, which supports fast rank queries
  on bitvectors.
\end{itemize}

The total length of standard wavelet tree bitvectors is equal to the
size of the original, uncompressed text in bits.  All other data
structures can be fitted in less space: asymptotically less in theory,
and significantly less in practice.  Basic zero-order compression is
achieved either with compressed bitvector rank structures, such as
RRR~\cite{rrr2007}, or Huffman-shaped wavelet trees~\cite{ggv2004}.
For higher order compression, we can use a technique called
\emph{compression boosting}~\cite{fgms2005,fmmn2007}, where the BWT is
partitioned into blocks of varying sizes based on the context of
symbols in $T$, and there is a separate, zero-order compressed wavelet
tree for each block.  An optimal partitioning into context blocks can
be found in linear time~\cite{fgms2005}.

Our main result is a technique called {\em fixed block compression
  boosting}. It is similar to context block boosting, but divides the
BWT into blocks of fixed sizes without any regard to the symbol
contexts. Such a division is inoptimal, but we show that it cannot be
much worse than the optimal one. What we gain is simpler and faster
data structures. The difference is particularly dramatic in the
construction phase.

The RRR-structure for compressed bitvector rank~\cite{rrr2007} divides
the bitvectors into small blocks of fixed sizes. M\"akinen and
Navarro~\cite{mn2007-implicit} show that this achieves a similar
compression boosting effect without any explicit division of the BWT.
This is called \emph{implicit compression boosting}.  Their analysis
of the effect of fixed blocks inspired our analysis, but the extension
from small blocks on bitvectors to larger blocks and larger alphabets
is non-trivial.

There are implementations of FM-indexes without any compression
boosting~\cite{fgnv2009}, with optimal context block
boosting~\cite{fgnv2009}, and with implicit boosting~\cite{cn2008}.
Fixed block boosting has practical advantages over all these, which we
demontrate experimentally.

\section{Basic Algorithmic Machinery}

Let $T = T[0..n-1] = T[0]T[1]\ldots T[n-1]$ be a string of $n$ symbols
or characters drawn from an alphabet $\Sigma=\{0,1,..,\sigma-1\}$.  We
assume that $T[n-1]=0$ and $0$ does not appear anywhere else in
$T$. In the examples, we use `\$' to denote 0 and letters to denote
other symbols.

For any $i\in 0..n-1$, the string $T[i..n-1]T[0..i-1]$ is a
\emph{rotation} of $T$. Let $\Mbwt$ be the $n \times n$ matrix whose
rows are all the rotations of $T$ in lexicographic order. Let $F$ be
the first and $L$ the last column of $\Mbwt$, both taken to be strings
of length $n$. The string $L$ is the \itbf{Burrows--Wheeler transform}
of $T$. An example is given in Fig.~\ref{fig:BWT}. Note that $F$ and
$L$ are permutations of $T$.

\begin{figure}
  \centering

{\tt
  \begin{tabular}{|c|ccccc|c|}
    \multicolumn{1}{c}{$F$} &&&&&\multicolumn{1}{c}{}& \multicolumn{1}{c}{$L$}\\
    \cline{1-1}\cline{7-7}
    $\$$&B&A&N&A&N&A\\
    A&$\$$&B&A&N&A&N\\
    A&N&A&$\$$&B&A&N\\
    A&N&A&N&A&$\$$&B\\
    B&A&N&A&N&A&$\$$\\
    N&A&$\$$&B&A&N&A\\
    N&A&N&A&$\$$&B&A\\
    \cline{1-1}\cline{7-7}
  \end{tabular}
}
\caption{BWT matrix $\Mbwt$ for text $T=\str{BANANA\$}$.}
\label{fig:BWT}
\end{figure}

The FM-family of compressed text self-indexes is based on a procedure
called \itbf{backward search}, which finds the range of rows in
$\Mbwt$ that begin with a given pattern $P$. This range represents the
occurrences of $P$ in $T$. Fig.~\ref{fig:count} shows how backward
search is used for implementing the count query.  In the algorithm,
$C[c]$ is the position of the first occurrence of the symbol $c$ in
$F$, and the function $\rank_L$ is defined as
\[
\rank_L(c,j) \equiv \big|\{i\mid i<j \mbox{ and } L[i]=c\}\big|
\]
The main difference between the members of the FM-family is how they
implement the $\rank_L$-function. The best ones use wavelet trees.

\begin{figure}
  \centering
  \begin{tabbing}
    10:\=\qquad\=\qquad\=\qquad\=\kill
    \rule{\linewidth}{1pt}\\
    \balgorithm{FM-Count($P[0..m-1]$)}\\
    1:\>$b \la 0$; $e \la n$ \\
    2:\>\bfor $i \la m-1$ \bdownto $0$ \bdo\\
    3:\>\>$c \la P[i]$\\
    4:\>\>$b \la C[c] + \rank_L(c,b)$\\
    5:\>\>$e \la C[c] + \rank_L(c,e)$\\
    6:\>\>\bif $b=e$ \bthen \bbreak \rem{//The range is empty}\\
    7:\>\breturn $e-b$ \rem{//The range is $b..e-1$}\\
    \rule{\linewidth}{1pt}
  \end{tabbing}
  \caption{Counting pattern occurrences using backward search.}
  \label{fig:count}
\end{figure}

A wavelet tree of a string $X$ over an alphabet $\Sigma$ is a binary
tree with leaves labelled by the symbols of $\Sigma$.  Each node $v$
is associated with the subsequence of $X$ consisting of those symbols
that appear in the subtree rooted at $v$. The associated strings are
not stored; instead each internal node $v$ stores a bitvector $B(v)$
that tells for each character in the associated string whether it is
in the left or right subtree of $v$. Fig.~\ref{fig:wt} shows examples
of the two commonly used variants of wavelet trees, the balanced and
the Huffman-shaped.

The balanced wavelet tree is easy to
implement with low overhead. The total length of the bitvectors is
$|X|\lceil\log|\Sigma|\rceil$, which is exactly the length of $X$ in
bits using the standard representation. On the other hand, the
Huffman-shaped wavelet tree (HWT) is the one that minimizes the total
length of the bitvectors, which equals the size of the Huffman
compressed string $X$.

\begin{figure}
  \centering \input{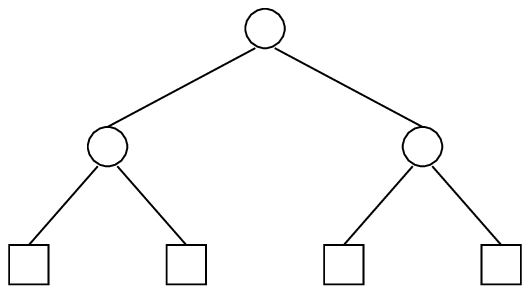} \hspace{2cm} \input{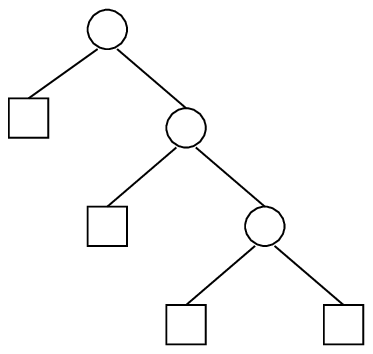}
  \caption{Balanced (left) and Huffman-shaped (right) wavelet trees.}
  \label{fig:wt}
\end{figure}

A rank query $\rank_X(c,r)$ over a wavelet tree is evaluated by a
traversal from the root to the leaf labelled by $c$, as shown in
Fig.~\ref{fig:wt-rank}. The procedure involves rank queries over the
bitvectors stored on the root-to-leaf path.

\begin{figure}
  \centering
  \begin{tabbing}
    10:\=\qquad\=\qquad\=\qquad\=\kill
    \rule{\linewidth}{1pt}\\
    \balgorithm{WT-Rank($c,r$)}\\
    1:\>$v \la \mbox{root}$; $q \la r$ \\
    2:\>\bwhile $v$ is not a leaf \bdo\\
    3:\>\>\bif $c$ is in the left subtree of $v$ \bthen\\
    4:\>\>\>$q \la q - rank_{B(v)}(1,q)$ \\
    5:\>\>\>$v \la \mbox{leftchild}(v)$\\
    6:\>\>\belse\\
    7:\>\>\>$q \la rank_{B(v)}(1,q)$ \\
    8:\>\>\>$v \la \mbox{rightchild}(v)$\\
    9:\>\breturn q\\
    \rule{\linewidth}{1pt}
  \end{tabbing}
  \caption{Rank operation using a wavelet tree.}
  \label{fig:wt-rank}
\end{figure}

There are many data structures for representing bitvectors so that
rank queries can be answered efficiently~\cite{os2007,v2008,cn2008}.
They can be divided into two main categories. Uncompressing techniques
leave the bitvector intact but use a small (usually sublinear) data
structure on top of it. Compressing techniques compress the bitvector
as well as prepare it for rank queries.

\section{Compression Boosting}

Recall that $T$ is a string of length $n$ over an alphabet $\Sigma$ of
size $\sigma$.  For each $c\in\Sigma$, let $n_c$ denote the number of
occurrences of $c$ in $T$.  The zero-order empirical
entropy~\cite{m2001} of $T$ is
\begin{equation}
  \label{eq:entropy}  
H_0(T)=\sum_{c\in\Sigma} \frac{n_c}{n} \log \frac{n}{n_c}
 = \log n - \frac{1}{n} \sum_{c\in\Sigma} n_c \log n_c.
\end{equation}
Let $n_w$ be the number of occurrences of a string $w$ in $T$, and let
$T|w$ be the subsequence of $T$ consisting of those characters that
appear in the \emph{(right) context}~$w$, i.e., that are immediately
followed by $w$. Here $T$ is taken to be a cyclic string, so that each
character has a context of every length.  The $k^\textrm{th}$ order
empirical entropy is
\[
H_k(T) = \sum_{w\in\Sigma^k} \frac{n_w}{n} H_0(T|w)\ .
\]
The value $nH_k(T)$ represents a lower bound on the number of bits
needed to encode $T$ by any compressor that considers a context of
size at most $k$ when encoding a symbol. Note that $H_{k+1}(T)\le
H_k(T)$ for all $k$.

A remarkable property of $L$, the BWT of $T$, is that $T|w$ is a
contiguous substring of $L$ for any $w$; we call the substring the
$w$-context block of $L$. For example, if $T=\str{BANANA\$}$, then
$T|\str{A}=\str{NNB}=L[1..3]$ (see Fig.~\ref{fig:BWT}).  Thus we get
the following result.
\begin{lemm}[\cite{m2001}]
  For any $k\ge 0$, there exists a partitioning of $L_1L_2\cdots
  L_\ell=L$ of the BWT $L$ of $T$ into $\ell\leq \sigma^k$ blocks so
  that
  \[
  \sum_{i=1}^{\ell} |L_i|H_0(L_i) = nH_k(T)\ .
  \]
\end{lemm}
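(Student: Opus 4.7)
The plan is to construct the partitioning explicitly by grouping the rows of $\Mbwt$ according to their length-$k$ prefix. Because the rows of $\Mbwt$ are in lexicographic order, all rows whose prefix of length $k$ equals a fixed $w\in\Sigma^k$ form a contiguous range of row indices, and hence the corresponding entries of the last column $L$ form a contiguous substring, which I will call $L_w$. There are at most $\sigma^k$ distinct length-$k$ strings, so the number of nonempty blocks is $\ell\le \sigma^k$. Listing these blocks in lexicographic order of $w$ gives a partitioning $L_1L_2\cdots L_\ell = L$.

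The next step is to identify each $L_w$ with $T|w$. Every row of $\Mbwt$ that begins with $w$ corresponds to some cyclic rotation $T[i..n-1]T[0..i-1]$ with $T[i..i+k-1]=w$, and the entry of $L$ in that row is $T[i-1 \bmod n]$, i.e.\ exactly the character of $T$ that is immediately followed (cyclically) by $w$. Hence $L_w$ is a permutation of $T|w$, so $|L_w| = n_w$ and, since $H_0$ depends only on symbol frequencies, $H_0(L_w) = H_0(T|w)$.

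Given this identification, the claimed identity follows by a direct calculation from the definition of $H_k$:
\[
\sum_{i=1}^{\ell} |L_i|\, H_0(L_i) \;=\; \sum_{w\in\Sigma^k} n_w\, H_0(T|w) \;=\; n \sum_{w\in\Sigma^k} \frac{n_w}{n}\, H_0(T|w) \;=\; n H_k(T),
\]
where contexts $w$ with $n_w = 0$ contribute nothing and can be omitted from the partitioning.

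The only subtle point, and the step I would be most careful about, is the contiguity claim underlying the first paragraph: it needs the cyclic viewpoint (so that every position of $T$ has a well-defined length-$k$ right context, even near the end marker) together with the lexicographic ordering of $\Mbwt$. Once contiguity and the equality $L_w = T|w$ (as multisets of characters) are in hand, everything else is bookkeeping against the definitions of $H_0$ and $H_k$ given in~(\ref{eq:entropy}).
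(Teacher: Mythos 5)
Your proof is correct and follows exactly the idea the paper sketches immediately before stating the lemma: the paper observes that $T|w$ is a contiguous substring of $L$ (the $w$-context block) and cites Manzini for the result without giving a formal proof, and your argument simply fills in the details of why the blocks are contiguous, why $L_w$ and $T|w$ agree as multisets, and how the identity for $H_k$ then drops out of the definitions. There is no divergence in approach, just a more explicit write-up.
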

In other words, by compressing each BWT block to zero-order entropy
level, we obtain $k^\textrm{th}$ order entropy compression for the
whole text.  This is called \emph{compression
  boosting}~\cite{fgms2005}.

The space requirement of an FM-index is usually dominated by the
wavelet tree bitvectors.  The total length of the bitvectors in the
balanced wavelet tree of $L$ is $n\lceil\log\sigma\rceil$. Using a
Huffman-shaped wavelet tree reduces this down to at most
$n(H_0(T)+1)$. An alternative way to achieve zero-order compression is
to use compressed bitvector rank indexes. For example, using a rank
index of Raman, Raman and Rao (RRR)~\cite{rrr2007}, the total size of
the rank indexes (without HWT or boosting) is
$nH_0(T)+o(n)\log\sigma$.

Compression boosting improves the $H_0(T)$ factor to
$H_k(T)$~\cite{fmmn2007}: Divide the BWT into context blocks using
context of length $k$ and implement a separate wavelet tree for each
block. There is an additional space overhead of $\Oh(\sigma\log n)$
bits per block from having many blocks and wavelet trees instead of
just one. The total overhead is $o(n)$ for $k\leq
((1-\epsilon)\log_\sigma n)-1$ and any constant $\epsilon>0$.

It may not be optimal to use the same context length in all parts of
$L$. Ferragina et al.~\cite{fgms2005} show how to find an optimal
partitioning with varying context length in linear time. The resulting
compression is at least as good as with \emph{any} fixed $k$.

M\"akinen and Navarro~\cite{mn2007-implicit} show that the boosting
effect is achieved with the RRR bitvector rank index without any
explicit context partitioning. This is called \emph{implicit
  compression boosting}. First, they observe that instead of
partitioning the BWT, we could partion the bitvectors and obtain the
same boosting effect. Second, the RRR technique partitions the
bitvectors into blocks of size $b=(\log n)/2$ and compresses each
independently. The RRR partitioning is not optimal, but M\"akinen and
Navarro show that the overhead due to the inoptimality is at most
$2\sigma\ell b \leq\sigma^{k+1}\log n = o(n)$ under the assumptions
mentioned above.

\begin{thrm}[\cite{fmmn2007,mn2007-implicit}]
  The FM-index either with explicit boosting and optimal
  partitioning~\cite{fmmn2007} or with implicit
  boosting~\cite{mn2007-implicit} can be implemented in
  $nH_k(T)+o(n)\log\sigma$ bits of space for any $k\leq
  ((1-\epsilon)\log_\sigma n)-1$ and any constant $\epsilon>0$.
\end{thrm}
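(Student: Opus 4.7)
The plan is to prove both halves of the theorem by reducing to the partitioning Lemma and then showing that the remaining structural overhead is $o(n)\log\sigma$ whenever $k\leq(1-\epsilon)\log_\sigma n -1$.

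For the explicit-boosting variant, I would first invoke the Lemma to obtain $L=L_1L_2\cdots L_\ell$ with $\ell\leq\sigma^k$ and $\sum_i |L_i|H_0(L_i)=nH_k(T)$, and build an independent zero-order-compressed wavelet tree for each $L_i$ (Huffman-shaped, or balanced with RRR-compressed bitvectors). A standard zero-order analysis gives $|L_i|H_0(L_i)+\Oh(\sigma\log n)$ bits per block, where the $\Oh(\sigma\log n)$ term collects tree topology, the alphabet map, cumulative symbol counts, and bitvector sub-block directories. Summing over blocks yields $nH_k(T)+\Oh(\sigma^{k+1}\log n)$ bits; since the constraint on $k$ forces $\sigma^{k+1}\leq n^{1-\epsilon}$, the overhead is $\Oh(n^{1-\epsilon}\log n)=o(n)$, which is absorbed into $o(n)\log\sigma$. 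Finally, replacing the fixed-$k$ partition by the optimal context partition of Ferragina et al.~\cite{fgms2005} only decreases the leading term, so the same bound continues to hold.

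For the implicit-boosting variant, the plan is different: keep a single wavelet tree over $L$, store each of its bitvectors with RRR using blocks of size $b=(\log n)/2$, and use the Lemma's partition only in the analysis. If every context-block boundary projected onto an RRR block boundary in every wavelet-tree bitvector, the total bitvector cost would telescope to $\sum_i |L_i|H_0(L_i)+o(n)=nH_k(T)+o(n)$. The correction for misaligned RRR blocks is handled by the key observation that any RRR block straddling two context classes has codeword length exceeding the sum of its two aligned parts by at most $\Oh(b)$ bits, because an RRR block is encoded by a class index plus an intra-class index whose combined length is subadditive up to $\Oh(b)$. Summing this correction over at most $\sigma$ bitvectors and $\ell\leq\sigma^k$ context boundaries gives at most $2\sigma\ell b\leq\sigma^{k+1}\log n=o(n)$, matching the explicit case.

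The main obstacle is formalizing that per-straddling-block bound in the implicit case: I would need to verify carefully, using the explicit RRR encoding together with elementary bounds on multinomial coefficients, that merging two sub-blocks with distinct symbol distributions into a single RRR block inflates its codeword length by only $\Oh(b)$ bits, and this must be argued uniformly over the $\sigma$ wavelet-tree bitvectors whose alphabets shrink as one descends the tree. The explicit-boosting case is more mechanical; its only delicate point is budgeting the $\Oh(\sigma\log n)$ per-block overhead so that it does not blow up when $\sigma$ itself grows with $n$, but this is already ensured by $\sigma^{k+1}\leq n^{1-\epsilon}$.
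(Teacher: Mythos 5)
This theorem is attributed to~\cite{fmmn2007,mn2007-implicit}; the paper never proves it, but only recounts the argument informally in the two paragraphs preceding the statement. Your reconstruction tracks that sketch faithfully: the explicit-boosting bound via the partitioning lemma with $\ell\leq\sigma^k$ blocks and per-block structural overhead $\Oh(\sigma\log n)$, budgeted using $\sigma^{k+1}\leq n^{1-\epsilon}$; the observation that the optimal variable-context partition of Ferragina et al.\ can only decrease the leading term; and the implicit-boosting argument with RRR block size $b=(\log n)/2$ and total misalignment charge $2\sigma\ell b\leq\sigma^{k+1}\log n=o(n)$, which is exactly the figure the paper quotes from M\"akinen and Navarro. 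So your route is the same one the paper alludes to.

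Two smaller remarks. First, the ``main obstacle'' you flag in the implicit case is simpler than you suggest: no subadditivity property of the RRR code, and no bounds on multinomial coefficients, are needed. Any RRR block of length $b$ is encoded in $\Oh(b)$ bits outright (a $\ceil{\log(b+1)}$-bit class index plus an in-class offset of at most $b$ bits), while the aligned sub-blocks contribute a nonnegative amount; hence the penalty per straddling block is trivially $\Oh(b)$. Each of the $\ell$ context boundaries induces one position in each of the at most $\sigma-1$ wavelet-tree bitvectors (not just the $\lceil\log\sigma\rceil$ on one root-to-leaf path), and each such position affects at most one RRR block, which gives the $\Oh(\sigma\ell b)$ total. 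Second, a caution on your parenthetical: a Huffman-shaped wavelet tree alone costs up to $+1$ bit per symbol, i.e.\ an additive $+n$, which is not absorbed by $o(n)\log\sigma$ when $\sigma$ is constant; to obtain the bound as stated you need zero-order compressed (RRR) bitvectors, as the paper's earlier discussion indicates.
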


\section{Fixed Block Compression Boosting}

In this section, we show that the compression boosting effect can also
be achieved with a partitioning into blocks of fixed sizes without any
regard to symbol context.

Let $H(x,y)=|B|H_0(B)$, where $B$ is a bitvector containing $x$ 0's
and $y$ 1's. Let $|X|_c$ denote the number of occurrences of a symbol
$c$ in a string $X$. The following lemma shows what can happen to the
total zero-order entropy when two strings are concatenated.  
\begin{lemm}
\label{lm:concat}
  For any two strings $X$ and $Y$ over an alphabet $\Sigma$, 
  \begin{equation*}
    \begin{split}
      0 &\leq
      |XY|H_0(XY) - |X|H_0(X)-|Y|H_0(Y) \\
      &= H(|X|,|Y|)-\sum_{c\in\Sigma}H(|X|_c\,,|Y|_c) \leq H(|X|,|Y|)
      \leq |XY|\ .
    \end{split}
  \end{equation*}
\end{lemm}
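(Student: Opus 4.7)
The plan is to prove the chain of (in)equalities from the middle outward. Using the identity $|X|H_0(X) = |X|\log|X| - \sum_{c\in\Sigma}|X|_c\log|X|_c$ implied by (\ref{eq:entropy}), and the analogous expressions for $Y$ and $XY$, together with $|XY|=|X|+|Y|$, $|XY|_c=|X|_c+|Y|_c$, and the explicit formula $H(x,y)=(x+y)\log(x+y)-x\log x-y\log y$, the middle equality will follow by a direct expansion in which the $\log|X|$, $\log|Y|$, $\log|XY|$ terms on each side cancel against the matching terms coming from the $H(\cdot,\cdot)$-expansion on the other side. This is pure bookkeeping.

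The two upper bounds are immediate once the equality is in place. Since $H(x,y)$ is the length of a bitvector times its zero-order entropy, every summand $H(|X|_c,|Y|_c)$ is non-negative, so discarding the whole sum can only increase the quantity, giving $H(|X|,|Y|) - \sum_c H(|X|_c,|Y|_c) \le H(|X|,|Y|)$. For the last bound, binary zero-order entropy is at most $\log 2 = 1$, so $H(|X|,|Y|)\le |X|+|Y|=|XY|$.

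The main obstacle is the lower bound, which is equivalent to the superadditivity $|XY|H_0(XY)\ge |X|H_0(X)+|Y|H_0(Y)$. I plan to derive this from the log-sum inequality applied character by character: with $a_1=|X|_c$, $a_2=|Y|_c$, $b_1=|X|$, $b_2=|Y|$, it yields
\[
|X|_c\log\frac{|X|}{|X|_c} + |Y|_c\log\frac{|Y|}{|Y|_c} \le |XY|_c\log\frac{|XY|}{|XY|_c},
\]
and summing over $c\in\Sigma$ produces exactly the desired inequality. A cleaner (equivalent) route is information-theoretic: sample a uniformly random position of $XY$ and record the pair (symbol $C$, source $S\in\{X,Y\}$); the equality becomes the chain rule $H(C,S)=H(S)+H(C|S)=H(C)+H(S|C)$ rescaled by $|XY|$, and the lower bound becomes non-negativity of the mutual information $I(S;C)\ge 0$. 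Either way, once the log-sum step (equivalently, non-negativity of KL divergence) is in hand, nothing further is needed beyond the algebra already described.
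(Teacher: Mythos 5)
Your proof is correct and follows essentially the same route as the paper: the equality is established by expanding all terms via the formula $nH_0 = n\log n - \sum_c n_c\log n_c$ and matching them up, the two right-hand inequalities are dismissed as trivial, and the lower bound comes from the log-sum inequality, which is exactly the ``standard application of Gibb's inequality'' the paper invokes without detail. Your information-theoretic reformulation (chain rule for the equality, $I(S;C)\ge 0$ for the lower bound) is a nice equivalent gloss but not a different method.
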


\begin{proof}
  The last two inequalities are trivial and the first is a
  standard application of Gibb's inequality. We will prove the
  equality part. For brevity, we write $x=|X|$, $y=|Y|$, $x_c=|X|_c$
  and $y_c=|Y|_c$. Using~(\ref{eq:entropy}), we can write the left-hand
  side terms as follows
  \begin{align*}
  (x+y)H_0(XY) &= (x+y) \log (x+y) 
  - \sum_{c\in\Sigma} (x_c+y_c) \log (x_c+y_c)\\
  xH_0(X) &= x\log x - \sum_{c\in\Sigma} x_c \log x_c \\
  yH_0(Y) &= y\log y - \sum_{c\in\Sigma} y_c \log y_c
  \end{align*}
  and the right-hand side terms as follows
  \begin{align*}
    H(x,y) &= (x+y) \log (x+y) - x \log x - y \log y \\
    H(x_c,y_c) &= (x_c+y_c) \log (x_c+y_c) - x_c \log x_c - y_c \log y_c
  \end{align*}
  From this it is easy to see that the terms on both sides match.
\qed
\end{proof}
In other words, the concatenation cannot reduce the total entropy, and
the entropy can increase by at most one bit per character.
Furthermore, the maximum increase happens only if the two strings have
the same length and no common symbols.

Using the above lemma we can bound the increase in entropy when we
switch from a context block partitioning to a fixed block
partitioning.
\begin{lemm}
  \label{lm:partition}
  Let $X_1X_2\cdots X_\ell=X$ be a string partitioned arbitrarily into
  $\ell$ blocks. Let $X^b_1X^b_2\cdots X^b_m=X$ be a partition of $X$
  into blocks of size at most b. Then
  \[
  \sum_{i=1}^m |X^b_i| H_0(X^b_i) \leq \sum_{i=1}^\ell |X_i|
  H_0(X_i) + (\ell-1) b\ .
  \]
\end{lemm}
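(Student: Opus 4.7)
The plan is to introduce the common refinement of the two partitions and apply Lemma~\ref{lm:concat} to it twice, once in each direction: splitting never increases total entropy, while each extra merge costs at most one bit per character across a block of size at most $b$.

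Concretely, I would start by letting $Z_1Z_2\cdots Z_N = X$ denote the common refinement of $\{X_i\}_{i=1}^\ell$ and $\{X^b_k\}_{k=1}^m$, i.e.\ the partition whose block boundaries are the union of the boundaries of the two given partitions. Since it refines $\{X_i\}$, each $X_i$ is the concatenation of some consecutive $Z_j$'s, so iterating the left inequality of Lemma~\ref{lm:concat} inside each $X_i$ yields
\[
  \sum_{j=1}^N |Z_j|H_0(Z_j) \ \leq\ \sum_{i=1}^\ell |X_i|H_0(X_i)\ .
\]
For the reverse comparison, each $X^b_k$ is the concatenation of some $r_k$ consecutive $Z_j$'s and can be assembled by $r_k-1$ binary concatenations in which every partial product is a substring of $X^b_k$ and hence has length at most $b$. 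Applying the right inequality $|UV|H_0(UV) \leq |U|H_0(U)+|V|H_0(V)+|UV|$ of Lemma~\ref{lm:concat} at each merge and telescoping gives
\[
  |X^b_k|H_0(X^b_k) \ \leq\ \sum_{j\,:\,Z_j\subseteq X^b_k} |Z_j|H_0(Z_j) + (r_k-1)b\ .
\]

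To finish, I would count $N$: the interior boundaries of the refinement are exactly the union of the $m-1$ interior boundaries of the fixed partition and the $\ell-1$ interior boundaries of $\{X_i\}$, so $N-1 \leq (m-1)+(\ell-1)$, i.e.\ $\sum_k(r_k-1) = N-m \leq \ell-1$. Summing the second display over $k$ and chaining it with the first then produces the claimed bound. The main thing to be careful about is the long-chain argument for a single fixed block: because every partial concatenation stays inside $X^b_k$, the per-merge overhead of $|UV|$ from Lemma~\ref{lm:concat} is bounded by $b$ at each of the $r_k-1$ steps, which is exactly what makes the cumulative overhead telescope to $(r_k-1)b$ rather than to something worse.
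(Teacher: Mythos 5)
Your proof is correct and follows essentially the same route as the paper's: pass to the common refinement (the paper phrases this as adding the split points of the fixed-block partition, then removing those of the original partition), use the left inequality of Lemma~\ref{lm:concat} for the splits and the right inequality for the merges, and observe that at most $\ell-1$ merges are needed, each costing at most $b$ bits because every intermediate concatenation lies inside a fixed block of size at most $b$. Your write-up is somewhat more explicit about the telescoping and the boundary count, but the underlying argument is the same.
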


\begin{proof}
Consider a process, where we start with the first partitioning, add
the split points of the second partitioning, and then remove the split
points of the first partitioning (that are not split points in the
second). By Lemma~\ref{lm:concat}, adding split points cannot increase
the total entropy, and removing each split point can increase the
entropy by at most $b$ bits. \qed
\end{proof}
If we assume the same number of blocks in the two partitionings, the
very worst case increase in the entropy is $n-b$ bits. 
However, such a worst case is very unlikely and in practice
the increase is much smaller.

If we set the block size to $b=\sigma(\log n)^2$, we obtain the
following result.

\begin{thrm}
  The FM-index with explicit boosting and blocks of fixed sizes
  can be implemented in
  $H_k(T)+o(n)\log\sigma$ bits of space for any $k\leq
  ((1-\epsilon)\log_\sigma n)-1$ and any constant $\epsilon>0$.  
\end{thrm}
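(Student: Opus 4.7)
The plan is to chain the two lemmas and then tune the fixed block size $b$ so all error terms collapse into $o(n)\log\sigma$. First I would invoke the context-partitioning lemma cited earlier to obtain a partition $L_1L_2\cdots L_\ell=L$ of the BWT with $\ell\le\sigma^k$ such that $\sum_{i=1}^\ell |L_i|H_0(L_i)=nH_k(T)$. Then I would compare this against the fixed-block partition $L^b_1L^b_2\cdots L^b_m=L$ with $b=\sigma(\log n)^2$ and $m=\lceil n/b\rceil$, and apply Lemma~\ref{lm:partition} to get
\[
\sum_{i=1}^m |L^b_i|H_0(L^b_i)\;\le\;nH_k(T)+(\ell-1)b\;\le\;nH_k(T)+\sigma^k b.
\]

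Next I would build, for each fixed block $L^b_i$, a zero-order compressed wavelet tree (say Huffman-shaped, or balanced with an RRR-style bitvector rank index). The payload of block $i$ then takes $|L^b_i|H_0(L^b_i)+o(|L^b_i|)\log\sigma$ bits, plus a structural overhead of $\Oh(\sigma\log n)$ bits per block to store the local $C$-array, bitvector pointers and shape metadata. Summing over the $m=\Oh(n/b)$ blocks and substituting the entropy bound above, the total space is
\[
nH_k(T)+\sigma^k b+\Oh\!\left(\frac{n}{b}\,\sigma\log n\right)+o(n)\log\sigma.
\]

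The remaining step is to check that both error terms are $o(n)\log\sigma$ under the hypothesis $k\le((1-\epsilon)\log_\sigma n)-1$, i.e.\ $\sigma^{k+1}\le n^{1-\epsilon}$. The inoptimality penalty satisfies $\sigma^k b\le\sigma^{k+1}(\log n)^2\le n^{1-\epsilon}(\log n)^2=o(n)$, while the structural overhead is $\Oh((n/b)\sigma\log n)=\Oh(n/\log n)=o(n)$, and both are absorbed into $o(n)\log\sigma$.

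The main obstacle is calibration rather than analysis: $b$ must be large enough that the block count $n/b$ keeps the per-block $\sigma\log n$ overhead sublinear, yet small enough that $\sigma^k b$ remains sublinear over the full allowed range of $k$. The choice $b=\sigma(\log n)^2$ is the natural equilibrium between these two constraints. The only delicate point in the bookkeeping is verifying that the constants hidden inside the $\Oh(\sigma\log n)$-bit per-block overhead really are independent of $b$ and of the block's local alphabet, so that the sum over blocks telescopes cleanly.
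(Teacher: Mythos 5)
Your proposal is correct and follows essentially the same route as the paper: both pick $b=\sigma(\log n)^2$, apply Lemma~\ref{lm:partition} against the context partition to charge $\sigma^k b=\sigma^{k+1}(\log n)^2\le n^{1-\epsilon}(\log n)^2=o(n)$ bits of inoptimality, and account $\Oh(\sigma\log n)$ structural bits per block over $\Oh(n/b)$ blocks for $\Oh(n/\log n)=o(n)$. The only cosmetic difference is that you rebuild the $nH_k(T)+o(n)\log\sigma$ baseline from Lemma~1 directly, whereas the paper simply cites the already-stated context-boosting theorem and then bounds the delta.
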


\begin{proof}
  Using context block boosting with fixed context length $k$ and RRR
  to compress the bitvectors, the size of the FM-index is
  $nH_k(T)+o(n)\log\sigma$ bits. When we switch from context blocks to
  fixed blocks, we must add two types of overhead. First, by
  Lemma~\ref{lm:partition}, the total entropy increases by at most
  $\sigma^kb=\sigma^{k+1}(\log n)^2=n^{1-\epsilon}(\log n)^2=o(n)$
  bits. Second, the space needed for everything else but the bitvector
  rank indexes is $\Oh(\sigma\log n)$ bits per block. In total, this
  is $\Oh(n/\log n)=o(n)$ bits. Thus the total increase in the size of
  the FM index is $o(n)$ bits.
\end{proof}
Thus, we have the same theoretical result as with context block
boosting or implicit boosting.  

The advantages of fixed block boosting
compared to context block boosting are:
\begin{itemize}
\item To compute $\rank_L(c,r)$, we have to find the block containing the
  position $r$. With fixed blocks this is simpler and faster than with
  varying size context blocks.
\item Computing the optimal partitioning is complicated and
  expensive in practice. With fixed blocks, construction is much
  simpler and faster.
\end{itemize}
Explicit boosting (with either context blocks or fixed blocks) enables
faster queries than implicit boosting for the following reasons:
\begin{itemize}
\item Compressed bitvector rank indexes are slower than uncompressed
  ones by a significant constant factor. Explicit boosting can achieve
  higher order compression with Huffman-shaped wavelet trees allowing
  the use of the faster uncompressed rank indexes.
\item With implicit boosting, i.e., with a single wavelet tree for the
  whole BWT, the average count query time for a pattern $P$ is
  $\Theta(|P|\log\sigma)$ with a balanced wavelet tree and
  $\Theta(|P|H_0(T))$ with a HWT.
  With explicit boosting and HWTs, the average query time is reduced down to
  $\Oh(|P|H_k(T))$.
\end{itemize}

\section{Experimental Results}
\label{sec-experiments}

To assess practical performance we used the files listed in
Table~\ref{tab-data}\footnote{Available from
  \url{http://pizzachili.dcc.uchile.cl/}.}.
All tests were conducted on a 3.0 GHz Intel Xeon CPU
with 4Gb main memory and 1024K L2 Cache.
The machine had no other significant CPU tasks running.
The operating system was Fedora Linux running kernel 2.6.9. The
compiler was g++ (gcc version 4.1.1) executed with the {-}O3 option.
The times given were recorded with the C {\tt getrusage} function.
The memory requirements are sums of the sizes of all data structures
as reported by the {\tt sizeof} function.

\begin{table}[tb]
\caption{
\label{tab-data}Data sets used for empirical tests. For each type
of data (\dna, \xml, \english, \sources) a 100Mb
file was used.
}
\begin{center}
\tabcolsep 10pt
\begin{tabular}{l r r r}
\hline
Data set name &
\multicolumn{1}{c}{$\sigma$} &
\multicolumn{1}{c}{$H_0$} &
\multicolumn{1}{c}{mean LCP} \\
\hline
\xml        & 97&5.23&   44\\
\dna        & 16&1.98&   31\\
\english    &239&4.53&2,221\\
\sources    &230&5.54&  168\\
\hline
\end{tabular}
\end{center}
\end{table}

\begin{figure}[tb!]
\centering
\subfloat{\includegraphics[scale=0.8]{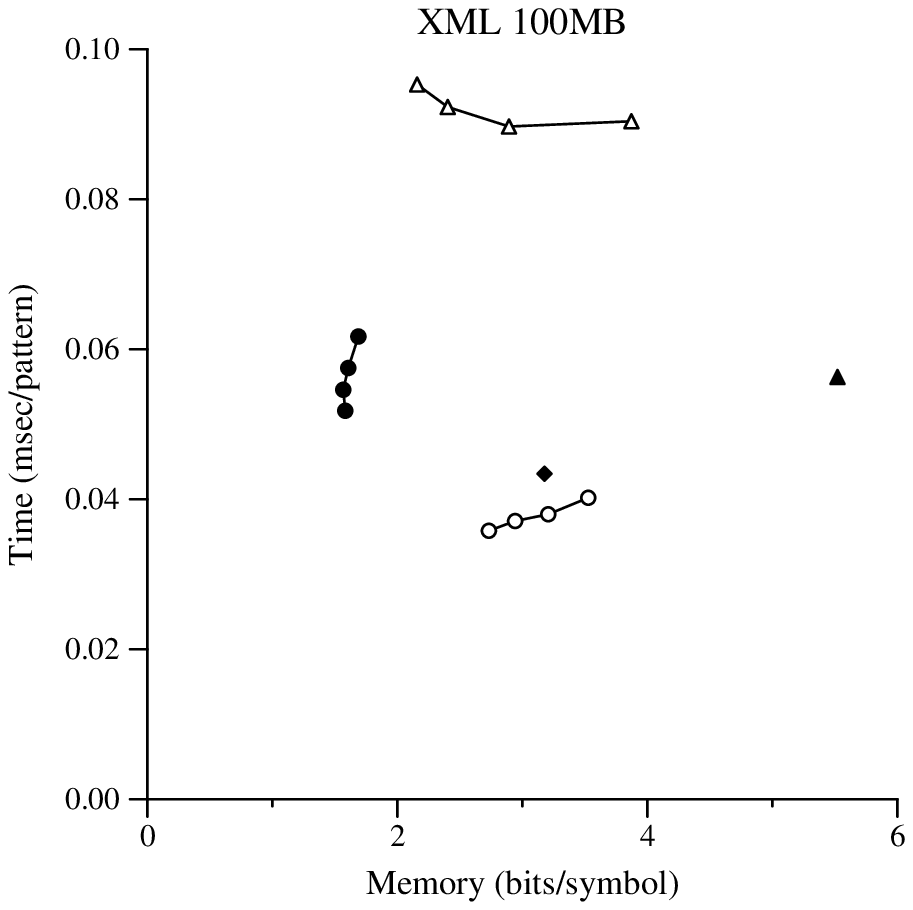}}
\subfloat{\includegraphics[scale=0.8]{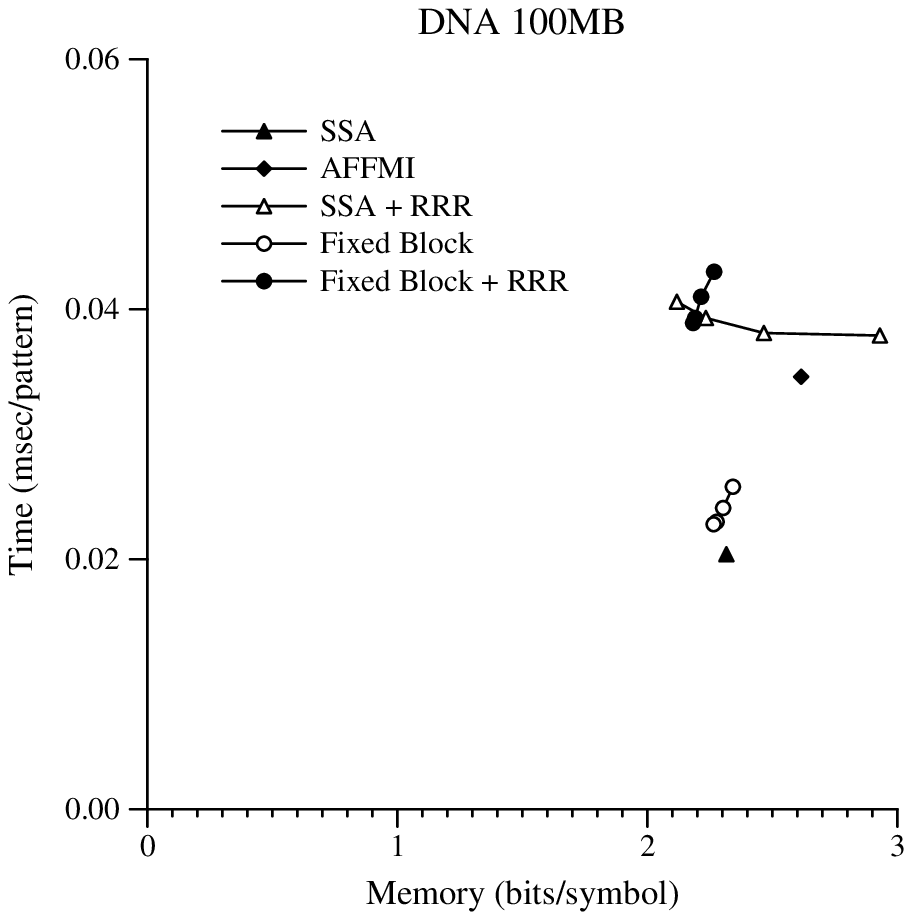}}

\subfloat{\includegraphics[scale=0.8]{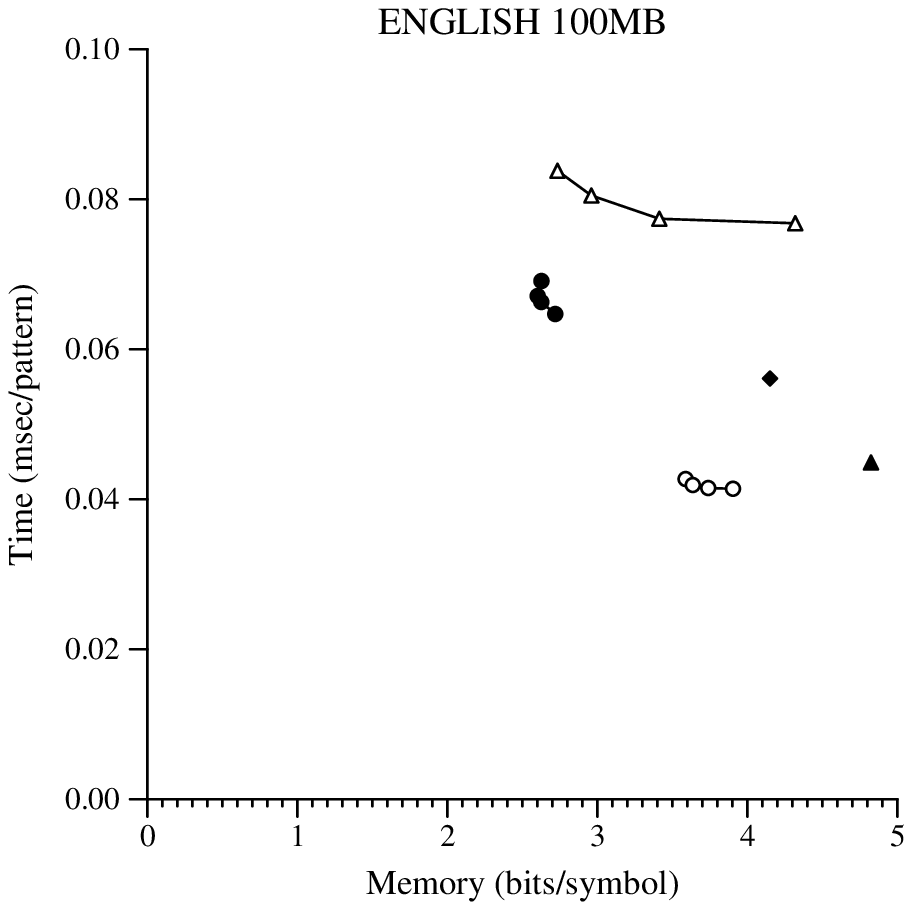}}
\subfloat{\includegraphics[scale=0.8]{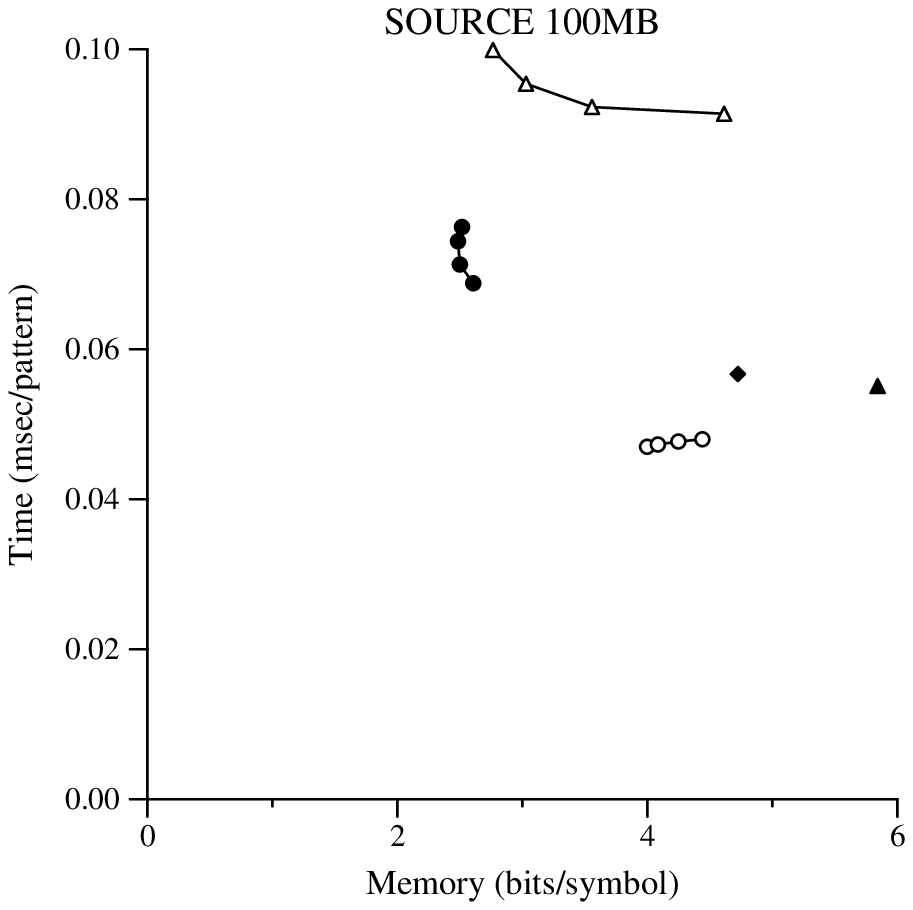}}
\caption{\label{fig-tradeoff} Time-Space tradeoff for various
  self-indexes. Memory (abcissa) is the index size in bits per input
  symbol. Time (ordinate) is the average number of milliseconds taken
  to count pattern occurrences (averaged over $10^6$ patterns).  }
\end{figure}

We measured the following FM-Index variants:
\begin{itemize}
\item SSA~\cite{mn2005} simply stores a single HWT for the whole $L$,
  consuming $nH_0 + o(n\log\sigma)$ bits of space. This is the fastest
  index according to experiments in both~\cite{cn2008}
  and~\cite{fgnv2009}.
\item SSA+RRR is the implicit compression boosting approach of
  M{\"a}kinen and Navarro~\cite{mn2007-implicit}.  As with SSA it
  builds a single HWT of $L$, however the bitvectors of the wavelet
  tree are now stored in a RRR compressed rank data structure. This
  method was first implemented by Claude and Navarro~\cite{cn2008}.
\item AFFMI~\cite{fmmn2007} uses optimal context block boosting with a
  separate HWT for each block. The
  implementation we use is from~\cite{fgnv2009}.
\item Fixed Block and Fixed Block+RRR are implementations of the new
  fixed block boosting technique that use, respectively, plain and RRR
  preprocessed HWTs to represent blocks.
\end{itemize}

Figure~\ref{fig-tradeoff} shows the trade-off between index size and
pattern counting time.  Following the methodology
of~\cite{cn2008,fgnv2009} we report query times averaged over a large
number of random patterns, extracted from the underlying text.

With the compressible texts (\xml, \sources and \english) the fixed
block indexes dominate the others in both space and time. On \dna,
which is not very compressible, fixed block indexes are still small
and fast, but the ranks stored at block boundaries are no longer paid
for by compression and the SSA+RRR, which does not need to store ranks
at block boundaries, is the smallest index. The small alphabet of \dna
means the single HWT of the SSA is shallow, making it fast.

The AFFMI, despite using optimal partitioning, is larger and slower
than the fixed block indexes. AFFMI stores a bitvector marking the
partitioning and issues a rank query on this bitvector to determine
the appropriate wavelet tree to use at each step in the backward
search process. This adds a significant time and space overhead which
the fixed block approach avoids entirely.

\section{Concluding Remarks}

The indexes we have presented based on fixed block compression
boosting are the most practical self-indexes to date, but we believe
there is yet more room for improvement. Our current focus is on
improving the RRR data structure to better exploit the structure of
wavelet tree bitvectors produced by the BWT. We are also exploring an
improved implementation of Huffman-shaped wavelet trees which use
substantially less space, enabling smaller blocks and thus better
compression.

A virtue of fixed block compression boosting our experiments have not
touched on is construction, which is easier with fixed blocks. The
final phase of indexing, where the BWT is turned into an FM-index, now
requires only $nH_k + o(n)\log\sigma + b\log\sigma$ bits of space,
instead of the (at least) $n\log\sigma + o(n)\log\sigma$ bits required
by variants to date. If the final index does not have to reside in
memory then at most $2b\log\sigma$ bits are needed for construction of
the index from the BWT. Construction time remains linear and is fast
in practice as the BWT is scanned only once, from left to right. For example, 
construction a fixed block index for the XML file takes just 12 seconds, 
while to build an index with optimal compression boosting requires 273 seconds. 
Ease of construction is also important when the aim is full inversion
of the BWT in a general purpose file compressor~\cite{kp2010}.
\bibliographystyle{splncs_srt}
\bibliography{sa}

\end{document}